
\documentclass{llncs}

\usepackage[english]{babel}
\usepackage[T1]{fontenc}
\usepackage{amsmath, amssymb, amsbsy}
\usepackage{mathdots}
\usepackage[linesnumbered,ruled,vlined,titlenumbered]{algorithm2e}
\usepackage{color}
\usepackage{cite}

\usepackage{enumitem}
\usepackage{colortbl}

\newtheorem{prob}{Problem}

\newcommand\qbin[3]{\left[\begin{matrix} #1 \\ #2 \end{matrix} \right]_{#3}}

\def\ve#1{{\mathchoice{\mbox{\boldmath$\displaystyle #1$}}%
              {\mbox{\boldmath$\textstyle #1$}}%
              {\mbox{\boldmath$\scriptstyle #1$}}%
              {\mbox{\boldmath$\scriptscriptstyle #1$}}}}

\definecolor{newcolor}{rgb}{0, 0.5, 0}
\newcommand{\new}[1]{#1}

\renewcommand{\vec}[1]{\ve{#1}}

\newcommand{\Fq}{\ensuremath{\mathbb{F}_q}}
\newcommand{\Fqm}{\ensuremath{\mathbb{F}_{q^m}}}

\DeclareMathOperator{\rank}{rk}

\newcommand{\Code}{\mathcal{C}}

\newcommand{\rspace}[2]{\mathcal{R}_{#2} \begin{pmatrix}#1\end{pmatrix}}

%
%

\renewcommand{\r}{\ve{r}}

\renewcommand{\a}{\ve{a}}

\newcommand{\e}{\ve{e}}

\newcommand{\g}{\ve{g}}
\newcommand{\m}{\ve{m}}
\renewcommand{\c}{\ve{c}}
\renewcommand{\e}{\ve{e}}
\renewcommand{\m}{\ve{m}}

\newcommand{\GGab}{\ve{G}_\mathrm{Gab}}

\newcommand{\A}{\ve{A}}
\newcommand{\B}{\ve{B}}

\newcommand{\ac}{\ve{a}_{\text{C}}}

\newcommand{\ar}{\ve{a}_{\text{R}}}
\renewcommand{\ae}{\ve{a}_{\text{E}}}

\newcommand{\Bc}{\ve{B}_{\text{C}}}
\newcommand{\Br}{\ve{B}_{\text{R}}}
\newcommand{\Be}{\ve{B}_{\text{E}}}
\newcommand{\BcHat}{\hat{\ve{B}}_{\text{C}}}

\newcommand{\dimI}{\epsilon}

\newcommand{\CGab}{Gab_k(\g)}

\newcommand{\Uspace}{\mathcal{U}}
\newcommand{\Vspace}{\mathcal{V}}
\newcommand{\Fql}{\mathbb{F}_{q^{\ell}}}
\newcommand{\jstar}{j^{*}}

\newcommand{\dec}[1]{\ensuremath{\textsf{Dec}(#1)}}

\newcommand{\assignDet}{\ensuremath{\leftarrow}}
\newcommand{\assignRand}{\ensuremath{\xleftarrow{\$}}}
\newcommand{\Grassm}[2]{\mathcal{G}(#1,#2)}


\begin{document}

\title{Randomized Decoding of Gabidulin Codes Beyond the Unique Decoding Radius}
\titlerunning{Randomized Decoding of Gabidulin Codes}
 
\author{Julian Renner\inst{1} \and
Thomas Jerkovits\inst{2} \and
Hannes Bartz\inst{2} \and
Sven Puchinger\inst{3} \and \newline
Pierre Loidreau\inst{4} \and
Antonia Wachter-Zeh\inst{1}}

\authorrunning{Renner, Jerkovits, Bartz, Puchinger, Loidreau, Wachter-Zeh}

\institute{
  Technical University of Munich (TUM), Munich, Germany\\
  \email{\{julian.renner, antonia.wachter-zeh\}@tum.de}\thanks{The work of J. Renner and A. Wachter-Zeh was supported by the European Research Council (ERC) under the European Union's Horizon 2020 research and innovation programme (grant agreement No 801434).} \and
German Aerospace Center (DLR), Oberpfaffenhofen-Wessling, Germany \\\email{\{hannes.bartz, thomas.jerkovits\}@dlr.de} \and
  Technical University of Denmark (DTU), Lyngby, Denmark \\\email{svepu@dtu.dk}\thanks{Sven Puchinger has received funding from the European Union's
Horizon 2020 research and innovation program under the Marie Sklodowska-Curie grant
agreement no.~713683 (COFUNDfellowsDTU).} \and
Univ Rennes, DGA MI, CNRS, IRMAR - UMR 6625, F-35000 Rennes, France \\\email{pierre.loidreau@univ-rennes1.fr }
}

\maketitle       

\begin{abstract}
We address the problem of decoding Gabidulin codes beyond their unique error-correction radius. 
The complexity of this problem is of importance to assess the security of some rank-metric code-based cryptosystems. We propose an approach that introduces row or column erasures to decrease the 
rank of the error in order to use any proper polynomial-time Gabidulin code error-erasure decoding algorithm.  
\new{The expected work factor of this new randomized decoding approach is a polynomial term times $q^{m(n-k)-w(n+m)+w^2+\min\{2\xi(\frac{n+k}{2}-\xi),wk\} }$, where $n$ is the code length, $q$ the size of the base field, $m$ the extension degree of the field, $k$ the code dimension, $w$ the number of errors, and $\xi := w-\tfrac{n-k}{2}$.
It improves upon generic rank-metric decoders by an exponential factor.}
\keywords{Gabidulin codes, decoding, rank metric, code-based cryptography}
\end{abstract}

\section{Introduction}
\new{Code-based cryptography relies on the hardness of certain coding-theoretic problems, e.g., decoding a random code up to its unique decoding radius or, as considered in this paper, decoding more errors than the unique decoding radius and beyond the capabilities of all known polynomial-time decoding algorithms. Rank-metric schemes that rely on the latter problem have the promising potential to achieve key sizes that are \emph{linear} in the security parameter and are for instance the (modified) Faure--Loidreau system~\cite{faure2006new,wachter2018repairing} or the RAMESSES system~\cite{lavauzelle2019}.}

In the Hamming metric as well as in the rank metric, it is well-known that the problem of decoding beyond the unique decoding radius, in particular \emph{Maximum-Likelihood} (ML) decoding, is a difficult problem concerning the complexity. In Hamming metric, many works have analyzed how hard it actually is, cf.~\cite{BMT78,S93B}, and it was finally shown for general linear codes that ML decoding is NP-hard by Vardy in \cite{V97}. 
For the rank metric, some complexity results were obtained more recently in \cite{GZ15}, 
emphasizing the difficulty of ML decoding. This potential hardness was also corroborated by the existing practical complexities of the generic rank metric decoding algorithms \cite{gaborit2016decoding}.

For specific well-known families of codes such as \emph{Reed--Solomon} (RS) codes in the Hamming metric, (ML or list) decoding can be done efficiently up to a certain radius. 
Given a received word, an ML decoder returns \emph{the} (or one if there is more than one) \emph{closest codeword} to the received word whereas a list decoder returns \emph{all codewords} up to a fixed radius. 
The existence of an efficient list decoder up to a certain radius therefore implies an efficient ML decoder up to the same radius. Vice versa, this is however not necessarily true, but we cannot apply a list decoder to solve the ML decoding problem efficiently.

In particular, for an RS code of length $n$ and dimension $k$, the following is known, depending on the Hamming weight $w$ of the error:\\[-3.5ex]
\begin{itemize}
	\item If $w \leq \left\lfloor\frac{n-k}{2}\right \rfloor$, the (ML and list) decoding result is unique and can be found in quasi-linear time,
	\item If $w < n-\sqrt{n(k-1)}$, i.e., the weight of the error is less than the Johnson bound, list decoding and therefore also ML decoding can be done efficiently by Guruswami--Sudan's list decoding algorithm \cite{GS99},
	\item The renewed interest in  RS codes after the design of the Guruswami--Sudan list decoder \cite{GS99} motivated new studies of the theoretical complexity of ML and list decoding of RS codes. 
	In \cite{GV05} it was shown that ML decoding of 
	RS codes is indeed NP-hard when $w \geq d-2$, even with some pre-processing.
	\item Between the Johnson radius and $d-2$, it has been shown in \cite{BenSasson2010Subspace} that the number of codewords in radius $w$ around the received word might become a number that grows super-polynomially in $n$ which makes list decoding of RS codes a hard problem. 
\end{itemize}

\emph{Gabidulin codes} \cite{Delsarte_1978,Gabidulin_TheoryOfCodes_1985,Roth_RankCodes_1991} can be seen as the rank-metric analog of RS codes. ML decoding of Gabidulin codes is in the focus of this paper which is much less investigated than for RS codes (see the following discussion). However, both problems (ML decoding of RS and Gabidulin codes) are of cryptographic interest. The security of the Augot--Finiasz public-key cryptosystem from \cite{AF03} relied on the hardness of ML decoding of RS codes but was broken by a structural attack. More recently, some public-key cryptosystems based their security partly upon the difficulty of solving the problem \textsf{Dec-Gab} (\textsf{Decisional-Gabidulin} defined in the following) and \textsf{Search-Gab} (\textsf{Search-Gabidulin}), i.e., decoding Gabidulin codes beyond the unique decoding radius or derived instances of this problem \cite{faure2006new,wachter2018repairing, lavauzelle2019}. 

\textsf{Dec-Gab} has not been well investigated so far. Therefore, we are interested in designing efficient algorithms to solve \textsf{Dec-Gab} which in turn assesses the security of several public-key cryptosystems.
We deal with analyzing the problem of decoding Gabidulin codes beyond the unique radius where a Gabidulin code of length~$n$ and dimension $k$ is denoted by $Gab_k(\g)$ and $\g = (g_0,g_1,\dots,g_{n-1})$ denotes the vector of linearly independent code locators.
\begin{prob}[\textsf{Dec-Gab}]\hfill
\begin{itemize}
  \item Instance: $Gab_k(\g) \subset \Fqm^n$, $\r \in \Fqm^n$ and an integer $w >0$.
  \item Query: Is there a codeword $\c \in Gab_k(\g)$, such that $\rank(\r - \c) \le w$?   
\end{itemize}
\end{prob}

It is trivial that \textsf{Dec-Gab}($Gab_k(\g)$, $\r$,  $w$) can be solved in deterministic polynomial time whenever:
\begin{itemize}
  \item $w \le\left\lfloor\frac{n-k}{2}\right \rfloor$, with applying a deterministic polynomial-time decoding algorithm for Gabidulin codes to $\r$.
  \item $w \ge n-k$: In this case the answer is always \texttt{yes} since this just tantamounts to finding a solution to an overdetermined full rank linear system (Gabidulin codes are {\em Maximum Rank Distance} codes). 
\end{itemize}

However, between $\left\lfloor\frac{n-k}{2}\right \rfloor$ and $n-k$, the situation for \textsf{Dec-Gab} is less clear than for RS codes (which was analyzed above). 

For instance, concerning  RS codes, the results from \cite{GV05} and \cite{BenSasson2010Subspace} state that there is a point in the interval $[ \left\lfloor\frac{n-k}{2}\right \rfloor, n-k]$ where the situation is not solvable in polynomial-time unless the  polynomial hierarchy collapses. 
For RS codes, we can refine the  interval to $[ n-\sqrt{n(k-1)}, n-k]$, because of the Guruswami-Sudan polynomial-time list decoder up to Johnson bound \cite{GS99}. 

On the contrary, for Gabidulin codes, there is no such a refinement. In \cite{Wachterzeh_BoundsListDecodingRankMetric_IEEE-IT_2013}, it was shown that for \emph{all} Gabidulin codes, the list size grows exponentially in $n$ when $w> n-\sqrt{n(k-1)}$. Further, \cite{RavivWachterzeh_GabidulinBounds_journal} showed that the size of the list is exponential for some Gabidulin codes  as soon as $w =  \left\lfloor\frac{n-k}{2}\right \rfloor + 1$. This result was recently generalized in \cite{ListDec_RankMetric_2019} to other classes of Gabidulin codes (e.g., twisted Gabidulin codes) and, more importantly, it showed that any Gabidulin code of dimension at least two can have an exponentially-growing list size for $w \geq  \left\lfloor\frac{n-k}{2}\right \rfloor + 1$.

To solve the decisional problem \textsf{Dec-Gab} we do not know a better approach than trying to solve the associated \emph{search} problem, which is usually done for all decoding-based problems.

\begin{prob}[\textsf{Search-Gab}]\hfill\label{prob:Search-MLD-Gab}
\begin{itemize}
  \item Instance: $Gab_k(\g) \subset \Fqm^n$, $\r \in \Fqm^n$ and an integer $w >0$. 
  \item Objective: Search for a codeword $\c \in Gab_k(\g)$, such that $\rank(\r - \c) \le w$.
  \end{itemize}
\end{prob}

Since \textsf{Dec-Gab} and \textsf{Search-Gab} form the security core of some rank-metric based cryptosystems, it is necessary to evaluate the effective complexity of 
solving these problems to be able to parameterize the systems in terms of security.

\new{In particular, the problems \textsf{Dec-Gab} and \textsf{Search-Gab} are related to the NIST submission RQC \cite{melchor2019rqc}, the (modified) Faure--Loidreau (FL) cryptosystem \cite{faure2006new,wachter2018repairing}, and RAMESSES \cite{lavauzelle2019}.
	
A part of the security of the newly proposed RAMESSES system \cite{lavauzelle2019} directly relies on the hardness of \textsf{Search-Gab} as solving \textsf{Search-Gab} for the public key directly reveals an alternative private key.

The (modified) FL cryptosystem \cite{faure2006new,wachter2018repairing} is based on the hardness of decoding Gabidulin codes beyond their unique decoding radius. Both, the security of the public key as well as the security of the ciphertext are based on this assumption. The public key can be seen as a corrupted word of an interleaved Gabidulin code whose decoders enabled a structural attack on the original system \cite{Gaborit-KeyRecoveryFaureLoidreau}. In the modified FL system \cite{wachter2018repairing}, only public keys for which all known interleaved decoders fail are chosen, therefore making the structural attack from \cite{Gaborit-KeyRecoveryFaureLoidreau} impossible. As shown in~\cite{JTBH2019}, the component codewords of the public key as well as the ciphertext are a Gabidulin codeword that is corrupted by an error of large weight. Therefore, solving \textsf{Search-Gab} has to be considered when determining the security level of the system.

The NIST submission RQC is based on a similar problem. Here, the ciphertext is also the sum of a Gabidulin codeword and an error of weight larger than the unique decoding radius. 
The error in this setting has a special structure. However, our problem cannot be applied directly to assess the security level of RQC since the error weight is much larger than in the FL and RAMESSES systems and solving \textsf{Search-Gab} for the RQC setting would return a codeword that is close to the error and therefore not the one that was encoded from the plaintext. 
It is not clear how to modify our algorithm to be applicable to RQC since we would have to be able to find exactly the encoded codeword and not just \emph{any} codeword. 
We are not aware of how this can be done but want to emphasize that the underlying problem of RQC is very similar to Problem~\ref{prob:Search-MLD-Gab}.
}

In this paper, we propose a randomized approach to solve \textsf{Search-Gab} and analyze its work factor. \new{The new algorithm consists of repeatedly guessing a subspace that should have a large intersection with the error row and/or column space. Then the guessed space is used as erasures in an Gabidulin error-erasure decoder, e.g.,~\cite{wachter2013decoding,silva2008rank}. The algorithm terminates when the intersection of the guessed space and the error row and/or column space is large enough such that the decoder outputs a codeword that is close enough to the received word $\r$.}

\new{This paper is structured as follows. In Section 2, we introduce the used notation and define Gabidulin codes as well as the channel model. In Section 3, we recall known algorithms to solve \textsf{Search-Gab} and state their work factors. We propose and analyze the new algorithm to solve Problem~\ref{prob:Search-MLD-Gab} in Section 4. Further, numerical examples and simulation results are given in Section 5. Open questions are stated in Section 6.}

\section{Preliminaries}
\subsection{Notation}
Let $q$ be a power of a prime and let
$\Fq$ denote the finite field of order $q$ and $\Fqm$ its extension field of order $q^m$.
This definition includes the important cases for cryptographic applications $q=2$ or $q=2^r$ for a small positive integer $r$.
It is well-known that any element of $\Fq$ can be seen as an element of $\Fqm$ and that $\Fqm$ is an $m$-dimensional vector space over $\Fq$.

We use $\Fq^{m \times n}$ to denote the set of all $m\times n$ matrices over $\Fq$ and $\Fqm^n =\Fqm^{1 \times n}$ for the set of all row vectors of length $n$ over $\Fqm$. Rows and columns of $m\times n$-matrices are indexed by $1,\dots, m$ and $1,\dots, n$, where $A_{i,j}$ is the element in the $i$-th row and $j$-th column of the matrix $\A$. In the following of the paper, we will always consider that $n \le m$. This is the necessary and sufficient condition to design Gabidulin codes.

For a vector $\a \in \Fqm^n$, we define its ($\Fq$-)rank by $\rank(\a) := \dim_{\Fq}\langle a_1,\dots,a_n \rangle_{\Fq}$, where $\langle a_1,\dots,a_n \rangle_{\Fq}$ is the $\Fq$-vector space spanned by the entries $a_i \in \Fqm$ of~$\a$. Note that this rank equals the rank of the matrix representation of $\a$, where the $i$-th entry of $\a$ is column-wise expanded into a vector in $\Fq^m$ w.r.t.\ a basis of $\Fqm$ over $\Fq$.

The Grassmannian $\Grassm{\mathcal{V}}{k}$ of a vector space $\mathcal{V}$ is the set of all $k$-dimensional subspaces of $\mathcal{V}$.

A linear code over $\Fqm$ of length $n$ and dimension $k$ is a $k$-dimensional subspace of $\Fqm^n$ and denoted by $[n,k]_{q^m}$.

\subsection{Gabidulin Codes and Channel Model}\label{sec:channel_model}
Gabidulin codes are a special class of rank-metric codes and can be defined by a generator matrix as follows.
\begin{definition}[Gabidulin Code \cite{Gabidulin_TheoryOfCodes_1985}]
A linear $Gab_k(\g)$ code over $\Fqm$ of length $n \leq m$ 
and dimension $k$ is defined by its $k \times n$ generator matrix 
\begin{equation*}
\GGab = 
\begin{pmatrix}
g_{1} & g_{2} & \dots& g_{n}\\
g_{1}^{q} & g_{2}^{q} & \dots& g_{n}^{q}\\
\vdots &\vdots&\ddots& \vdots\\
g_{1}^{q^{k-1}} & g_{2}^{q^{k-1}} & \dots& g_{n}^{q^{k-1}}\\
\end{pmatrix} \in \Fqm^{k \times n},
\end{equation*}
where $ g_1, g_2, \dots, g_{n} \in \Fqm$ are linearly independent over $\Fq$. 
\end{definition}
The codes are maximum rank distance (MRD) codes, i.e., they attain the maximal possible minimum distance $d=n-k+1$ for a given length $n$ and dimension $k$ \cite{Gabidulin_TheoryOfCodes_1985}.

Let $\r \in \Fqm^{n}$ be a codeword of a Gabidulin code of length $n\leq m$ and dimension $k$ that is corrupted by an error of rank weight $w$, i.e.,
\begin{equation*}
\r = \m \GGab + \e,
\end{equation*}
where $\m \in \Fqm^{k}$, $\GGab \in \Fqm^{k\times n}$ is a generator matrix of an $[n,k]_{q^m}$ Gabidulin code and $\e \in \Fqm^{n}$ with $\rank(\e) = w>\frac{n-k}{2}$. Each error $\e$ of rank weight $w$ can be decomposed into
\begin{equation*}
\e = \a \B,
  \end{equation*}
  where $\a \in \Fqm^{w}$ and $\B \in \Fq^{w\times n}$. The subspace $\langle a_1,\hdots, a_w \rangle_{\Fq}$ is called the column space of the error and the subspace spanned by the rows of $\B$, i.e. $\rspace{\B}{\Fq}$, is called the row space of the error.
  
We define the excess of the error weight $w$ over the unique decoding radius~as
\begin{equation*}
\xi:=w-\frac{n-k}{2}.
\end{equation*}
Note that $2\xi$ is always an integer, but $\xi$ does not necessarily need to be one.

The error $\vec{e}$ can be further decomposed into
\begin{equation}\label{eq:erasureComp}
\e = \ac \Bc + \ar \Br + \ae \Be,
  \end{equation}
  where $\ac \in \Fqm^{\gamma}$, $\Bc \in \Fq^{\gamma \times n}$, $\ar \in \Fqm^{\rho}$, $\Br \in \Fq^{\rho \times n}$, $\ae \in \Fqm^{t}$ and $\Be \in \Fq^{t \times n}$.

  Assuming neither $\ae$ nor $\Be$ are known, the term $\ae \Be$ is called full rank errors. Further, if $\ac$ is unknown but $\Bc$ is known, the product $\ac\Bc$ is called column erasures and assuming $\ar$ is known but $\Br$ is unknown, the vector $\ar\Br$ is called row erasures, see~\cite{silva2008rank,wachter2013decoding}. There exist efficient algorithms for Gabidulin codes~\cite{wachter2013decoding,GabidulinParamonovTretjakov_RankErasures_1991,RichterPlass_DecodingRankCodes_2004,silva2009error} that can correct $\delta := \rho + \gamma$ erasures (sum of row and column erasures) and $t$ errors if
\begin{equation}
2t+\delta  \leq n-k. 
\end{equation}

\section{Solving Problem~\ref{prob:Search-MLD-Gab} Using Known Algorithms}\label{sec:generic}

\subsection{Generic Decoding}

\begin{prob}[\textsf{Search-RSD}]\hfill\label{prob:Search-RSD}
\begin{itemize}
\item Instance: Linear code $\Code \subset \Fqm^n$, $\r \in \Fqm^n$ and an integer $w >0$.
\item Objective: Search for a codeword $\c \in \Code$, such that $\rank(\r - \c) \le w$.
\end{itemize}
\end{prob}

A generic rank syndrome decoding (RSD) algorithm is an algorithm solving Problem~\ref{prob:Search-RSD}. There are potentially many solutions to Problem~\ref{prob:Search-RSD} but for our consideration it is sufficient to find only one of them. 

Given a target vector $\r$ to Problem~\ref{prob:Search-RSD},
the probability  that $\c \in \mathcal{C}$ is such that $\rank(\r - \c) \le w$ is given by
\[
\Pr_{\c \in \mathcal{C}}[\rank(\r - \c) \le w] = \frac{\sum_{i=0}^{w-1}{ \left[\prod_{j=0}^{i-1}{(q^m - q^j)} \right] \qbin{n}{i}{q} } }{q^{mk}}.
\] 

There are two standard approaches for solving Problem~\ref{prob:Search-RSD}. The first method is {\em combinatorial decoding} which consists of enumerating vector spaces. If there is only one solution to the problem, the complexity of decoding an error of rank $w$ in an $[n,k]_{q^m}$ code is equal to
\[
\mathcal{W}_{Comb} = P(n,k) q^{w\lceil (k+1)m/n \rceil - m},
\]
where $P(n,k)$ is a cubic polynomial~\cite{aragon:ISIT18}. In the security evaluations, this polynomial is often neglected and only the exponential term is taken into account. Note that in the case where $m > n$ there might be a better combinatorial bound~\cite{gaborit2016decoding}. Since we do not address this setting, we do not consider this case.

For the evaluation of the post-quantum security, Grover's algorithm has to be taken into account which reduces the complexity of enumeration by a factor of $0.5$ in the exponent. Thus, the estimated complexity is  
 \[
\mathcal{W}_{PQ\_Comb} = P(n,k) q^{0.5(w \lceil (k+1) m/n \rceil - m)}.
\]
Since this is an enumerative approach, the work factors for solving the problem with input $\r$ have to be divided by 
$\mathcal{N} = \max(|\mathcal{C}| \cdot \Pr_{\c \in \mathcal{C}}[\rank(\r - \c) \le w],1)$, corresponding to the estimated number of candidates.

The second approach is {\em algebraic decoding}. It consists of expressing the problem in the form of a multivariate polynomial system and computing a Gr\"obner basis to solve it. A very recent result~\cite{B3GNRT19} estimates rather precisely the cost of the attack and gives generally much better estimations than the combinatorial approach. In case there is a unique solution to the system, then the work factor of the algorithm is

\begin{equation*}
  \mathcal{W}_{Alg} =
\begin{cases}
  O\left( \left[ \frac{((m+n)w)^{w}}{w!} \right]^\mu \right) & \text{if $m{n-k-1 \choose w } \le {n \choose w}$} \\[8pt]
  O\left( \left[ \frac{((m+n)w)^{w+1}}{(w+1)!} \right]^\mu \right) & \text{otherwise, }
  \end{cases}
  \end{equation*}
where $\mu = 2.807$ is the linear algebra constant.
For algebraic decoding, it is neither known how to improve the complexity by using the fact that there are multiple solutions, nor it is known how to speed up the algorithm in the quantum world.

Problem~\ref{prob:Search-MLD-Gab} is a special instance of Problem~\ref{prob:Search-RSD}, where the linear code is a Gabidulin code. In the following, we will show how to reduce the complexity of solving Problem~\ref{prob:Search-MLD-Gab} by using that fact.

\subsection{Key Equation Based Decoding}
In~\cite{Gabidulin_TheoryOfCodes_1985}, a decoding algorithm of Gabidulin codes is presented that is based on solving a linear system of $n-k-w$ equations and $w$ unknowns (called the key equation~\cite[Lemma 4]{Gabidulin_TheoryOfCodes_1985}). If $w > \lfloor \frac{n-k}{2} \rfloor $, there are $w-(n-k-w) = 2\xi$ solutions to this linear system of equations~\cite[Lemma 4]{Wachter2010}, which include all $\c \in Gab_k(\g)$ such that $\rank(\r - \c) \le w$. Brute-force search through all solutions of the key equation solution space for a valid solution to Problem~\ref{prob:Search-MLD-Gab} has a work factor of
\begin{equation*}
\mathcal{W}_{Key} = \frac{n^2 q^{m 2 \xi}}{\mathcal{N}},
\end{equation*}
where checking one solution of the key equation solution space is in $O(n^2)$.

\section{A New Algorithm Solving Problem~\ref{prob:Search-MLD-Gab}}\label{sec:newAlgo}

In the considered problem, $\rank(\e) = w > \frac{n-k}{2}$ and we do not have any knowledge about the row space or the column space of the error, i.e., $\delta=0$ and $t > \frac{n-k}{2}$, meaning that the known decoders are not able to decode $\r$ efficiently.

The idea of the proposed algorithm is to guess parts of the row space and/or the column space of the error and use a basis for the guessed spaces to solve the corresponding error and column/row erasures (see~\eqref{eq:erasureComp}). This approach is a generalization of the algorithm presented in~\cite{JTBH2019}, where only criss-cross erasures are used to decode certain error patterns beyond the unique decoding radius.

The proposed algorithm is given in Algorithm~\ref{alg:guessRS}.
The function $\dec{\r,\vec{a}_R,\vec{B}_C}$ denotes a row/column error-erasure decoder for the Gabidulin code $Gab_k(\g)$ that returns a codeword $\hat{\vec{c}}$ (if $\rank(\vec{r}-\hat{\vec{c}})\leq t+\rho+\gamma$) or $\emptyset$ (decoding failure) and $\delta$ is the total number of guessed dimensions (sum of guessed dimensions of the row space and the column space).
\begin{algorithm}
  \caption{$\textsf{Column-Erasure-Aided Randomized Decoder}$}\label{alg:guessRS}
  \SetKwInOut{Input}{Input}\SetKwInOut{Output}{Output}
  \Input{Received word $\vec{r}\in\Fqm^n$, \\
        Gabidulin error/erasure decoder $\dec{\cdot,\cdot,\cdot}$, \\
        Dimension of guessed row space $\delta$, \\
        Error weight $w$, \\
        Maximum number of iterations $N_{max}$}
  \Output{$\hat{\vec{c}}\in\Fqm^n:\rank(\vec{r}-\hat{\vec{c}})\leq w$ or $\emptyset$ (failure)} 
  \ForEach{$i\in[1,N_{max}]$}{
    $\mathcal{U}\assignRand\Grassm{\Fq^n}{\delta}$ \hfill\tcp{guess $\delta$-dimensional subspace of $\Fq^n$}
    $\vec{B}_C\assignDet$ full-rank matrix whose row space equals $\mathcal{U}$ \\     
    $\hat{\vec{c}}\assignDet\dec{\vec{r},\vec{0},\vec{B}_C}$  \hfill \tcp{error and row erasure decoding} 
    \If{$\hat{\vec{c}}\neq\emptyset$}{
        \If{$\rank(\vec{r}-\hat{\vec{c}})\leq w$}{
          \Return{$\hat{\vec{c}}$}
        }
    }
  }
  \Return{$\emptyset$ (failure)}
\end{algorithm} 

In the following, we derive the work factor of the proposed algorithm. By $\dimI$, we denote the dimension of the intersection of our guess and the true error subspaces. As stated above, if
\begin{equation}
  2(w - \dimI) + \delta  \leq n-k,
  \label{eq:eeCond}
\end{equation}
any Gabidulin error-erasure decoder is able to correct the error, e.g.,~\cite{wachter2013decoding,silva2008rank}.

\begin{lemma}\label{lem:intersectionProb}
  Let $\Uspace$ be a fixed $u$-dimensional $\Fq$-linear subspace of $\Fql$.
  Let $\Vspace$ be chosen uniformly at random from $\Grassm{\Fql}{v}$.
  Then, the probability that the intersection of $\Uspace$ and $\Vspace$ has dimension at least $\omega$ is
  \begin{align*}
\mathrm{Pr}[\dim(\Uspace \cap \Vspace) \geq \omega]    &=\frac{ \sum_{i=\omega}^{\min \{u,v\}} \qbin{\ell-u}{v-i}{q} \qbin{u}{i}{q} q^{(u-i)(v-i)} }{\qbin{\ell}{v}{q}} \\
    &\leq 16 (\min \{u,v\} +1 -\omega)  q^{(\jstar-v)(\ell-u-\jstar)},
  \end{align*}
  where $\jstar := \min\{ v-\omega, \frac{1}{2}(\ell+v-u) \}$.
\end{lemma}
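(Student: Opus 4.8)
The equality part is a standard counting argument over the Grassmannian, so I would start there. Fix $\Uspace$ of dimension $u$. To count the $v$-dimensional subspaces $\Vspace$ with $\dim(\Uspace\cap\Vspace)=i$ exactly, I would first choose the intersection: an $i$-dimensional subspace of $\Uspace$, which can be done in $\qbin{u}{i}{q}$ ways. Then I would count the ways to extend a fixed such $i$-dimensional space $\Wspace$ to a $v$-dimensional space $\Vspace$ with $\Vspace\cap\Uspace=\Wspace$. Picking an ordered basis of $\Vspace$ extending a basis of $\Wspace$, each of the $v-i$ new vectors must avoid the span of $\Uspace$ together with the previously chosen vectors; this gives $\prod_{j=0}^{v-i-1}(q^\ell-q^{u+j})$ choices, and one divides by the number of ordered bases of a complement of $\Wspace$ in $\Vspace$ that avoid $\Uspace$, namely $\prod_{j=0}^{v-i-1}(q^v-q^{i+j})$. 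Simplifying the ratio $\prod_j (q^{\ell-u}-q^{j})\big/\prod_j(q^{v-i}-q^{j})$ yields exactly $\qbin{\ell-u}{v-i}{q}\,q^{(u-i)(v-i)}$ after pulling out the factor $q^{(u-i)(v-i)}$ from each numerator term. Summing over $i$ from $\omega$ to $\min\{u,v\}$ and dividing by the total count $\qbin{\ell}{v}{q}$ gives the claimed exact expression.

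For the upper bound I would proceed in two steps. First, bound the number of summands by $\min\{u,v\}+1-\omega$ and argue that the whole sum is at most that many times its largest term, so it suffices to bound each ratio
\[
R_i \;=\; \frac{\qbin{\ell-u}{v-i}{q}\,\qbin{u}{i}{q}\,q^{(u-i)(v-i)}}{\qbin{\ell}{v}{q}}.
\]
Second, I would use the elementary estimates $q^{a(b-a)}\le \qbin{b}{a}{q}\le C\, q^{a(b-a)}$ for a small absolute constant (one can take the refined form $\qbin{b}{a}{q}\le 4\,q^{a(b-a)}$, and lower bound $\qbin{\ell}{v}{q}\ge q^{v(\ell-v)}$) to reduce $\log_q R_i$ to a quadratic expression in $i$. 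Concretely $\log_q R_i$ is, up to an additive constant absorbed into the leading factor, at most
\[
(v-i)(\ell-u-v+i) + i(u-i) + (u-i)(v-i) - v(\ell-v),
\]
which simplifies to $-(\,i - (v-\tfrac12(\ell+v-u))\,)^2 + \text{const}$-type shape, i.e. a downward parabola in $i$. Hence the exponent is maximized at $i = \tfrac12(\ell+v-u)$ if that value lies in the admissible range, and otherwise (because the feasible $i$ satisfy $i\ge\omega$) at $i=\omega$; that is precisely the definition of $\jstar$ as $\min\{v-\omega,\tfrac12(\ell+v-u)\}$ after a change of variable $j = v-i$. Substituting $i = v-\jstar$ back into the quadratic and collecting terms gives exponent $(\jstar-v)(\ell-u-\jstar)$, and tracking the accumulated constant factors from the three Gaussian-binomial estimates together with the $\min\{u,v\}+1-\omega$ summand count yields the stated constant $16$.

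The routine part is the exact count; the only real care needed is in the inequality chain. The main obstacle is making the ``the maximum over $i$ of a concave quadratic, restricted to $i\ge\omega$, is attained at the clamped value $\min$'' argument airtight — in particular verifying that when the unconstrained optimum $\tfrac12(\ell+v-u)$ exceeds $v-\omega$ the restricted maximum is genuinely at $i=\omega$, and that the substitution indeed produces the clean bilinear form $(\jstar-v)(\ell-u-\jstar)$ rather than something that only dominates it. I would handle this by writing the exponent as a perfect square plus the value at $\jstar$ and checking the sign of the coefficient, then confirm the constant bookkeeping: $\qbin{b}{a}{q}\le 4q^{a(b-a)}$ is a known bound (it follows from $\prod_{j\ge1}(1-q^{-j})^{-1} < 4$ for $q\ge 2$), used twice in the numerator, against $\qbin{\ell}{v}{q}\ge q^{v(\ell-v)}$ in the denominator, giving $16$; any slack can be absorbed since $q\ge 2$.
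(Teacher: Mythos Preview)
Your proposal is correct and follows essentially the same route as the paper's own proof. The paper obtains the exact count by citing a known identity (from Etzion--Silberstein) rather than deriving it, and then performs the change of variable $j=v-i$ before applying the Koetter--Kschischang bound $\qbin{b}{a}{q}\le 4\,q^{a(b-a)}$ twice and $\qbin{\ell}{v}{q}\ge q^{v(\ell-v)}$ once; after the same algebraic simplification you sketch, the exponent collapses to $(j-v)(\ell-u-j)$, a concave quadratic in $j$ maximized at $\tfrac12(\ell+v-u)$ and clamped to the feasible range, which yields $\jstar$ and the factor $16(\min\{u,v\}+1-\omega)$ exactly as you describe.
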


\begin{proof}
See Appendix~\ref{app:proof_lemma}.
\end{proof}

In the following, we analyze guessing only the row space of the error, i.e., $\delta=\gamma$ and $\rho=0$.

\begin{lemma}\label{lem:pr_exact}
    Let $\r'= \m \GGab + \e'  \in \Fqm^n$, where $\rank(\e') = j$, $\e' = \a' \B'$ with $\a' \in \Fqm^{j}$, $\B' \in \Fq^{j\times n}$ and neither parts of the error row space nor column space are known, i.e., $\gamma=\rho=0$ and $t = j$. For $\delta \in [2\xi,n-k]$, the probability that an error-erasure decoder using a random $\delta$-dimensional guess of the error row space outputs $\m \GGab$ is
 \begin{align*}
P_{n,k,\delta,j} &:= \frac{\displaystyle\sum_{i=\lceil j- \frac{n-k}{2} + \frac{\delta}{2}\rceil}^{\min \{\delta,j\}} \qbin{n-j}{\delta-i}{q} \qbin{j}{i}{q}q^{(j-i)(\delta-i)}}{\qbin{n}{\delta}{q}} \\
              &\leq 16 n q^{-(\lceil\frac{\delta}{2}+j- \frac{n-k}{2}\rceil)(\frac{n+k}{2}-\lceil\frac{\delta}{2}\rceil)},
 \end{align*}
 if $2j + \delta > n-k$ and $P_{n,k,\delta,j} := 1$ else.
\end{lemma}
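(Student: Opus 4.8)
The plan is to reduce the statement to the combinatorial estimate of Lemma~\ref{lem:intersectionProb} for the intersection of a uniformly random subspace with a fixed one. Write $\mathcal{B}:=\rspace{\B'}{\Fq}\subseteq\Fq^n$ for the $j$-dimensional row space of $\e'$, and for a guess $\Uspace\in\Grassm{\Fq^n}{\delta}$ put $\dimI:=\dim(\Uspace\cap\mathcal{B})$ and $\omega:=\lceil j-\tfrac{n-k}{2}+\tfrac{\delta}{2}\rceil$. First I would dispose of the case $2j+\delta\le n-k$: here $2(j-\dimI)+\delta\le 2j+\delta\le n-k$ for \emph{every} $\Uspace$, so feeding (a full-rank matrix $\Bc$ with row space) $\Uspace$ into the decoder as $\delta$ column erasures leaves an effective full-error part of rank $j-\dimI$ meeting the error-erasure decoding condition; hence the decoder always returns $\m\GGab$ and $P_{n,k,\delta,j}=1$, matching the definition. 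From now on assume $2j+\delta>n-k$, so $\omega=\lceil\tfrac{2j+\delta-(n-k)}{2}\rceil\ge 1$.

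The core step is the claim that $\dec{\r',\0,\Bc}$ returns $\m\GGab$ \emph{if and only if} $\dimI\ge\omega$. For ``$\Leftarrow$'' I would take a matrix $\Be\in\Fq^{(j-\dimI)\times n}$ whose rows form a basis of a complement of $\Uspace$ in $\Uspace+\mathcal{B}$ (of dimension $(j+\delta-\dimI)-\delta=j-\dimI$); then the rows of $\Bc$ and $\Be$ together span $\Uspace+\mathcal{B}\supseteq\mathcal{B}$, so $\e'$ is an $\Fqm$-combination of them, $\e'=\ac\Bc+\ae\Be$ for suitable $\ac,\ae$, and the instance decomposes into $\gamma=\delta$ column erasures plus $t=j-\dimI$ full errors with $2t+\gamma\le n-k$; a proper error-erasure decoder then returns the unique codeword in this region, which is $\m\GGab$ (uniqueness because two such codewords would differ by a word of rank at most $\delta+2\lfloor\tfrac{n-k-\delta}{2}\rfloor\le n-k<d$). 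For ``$\Rightarrow$'' I would note $\r'-\m\GGab=\e'$ and that in any decomposition $\e'=\c''\Bc+\e''$ the space $\mathcal{B}$ is contained in the $\Fq$-span of $\Uspace$ and the row space of $\e''$, whence $\rank(\e'')\ge\dim(\mathcal{B}/(\mathcal{B}\cap\Uspace))=j-\dimI$; if $\dimI<\omega$ this is strictly larger than $\tfrac{n-k-\delta}{2}$ and thus exceeds the decoder's full-error budget $\lfloor\tfrac{n-k-\delta}{2}\rfloor$, so $\m\GGab$ is not returned. Therefore $P_{n,k,\delta,j}=\Pr[\dimI\ge\omega]$ with $\Uspace$ uniform on $\Grassm{\Fq^n}{\delta}$; if $\omega>\min\{j,\delta\}$ this probability is $0$ and the asserted bound is trivial, so we may assume $\omega\le\min\{j,\delta\}$.

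Next I would invoke Lemma~\ref{lem:intersectionProb} with $\Fq^n$ in place of $\Fql$ (legitimate, as only the $\Fq$-subspace lattice is used and $\Fq^n\cong\mathbb{F}_{q^n}$ as $\Fq$-vector spaces), taking $\ell=n$, $u=j$, $v=\delta$ and threshold $\omega$. This yields the displayed equality for the sum at once, together with $P_{n,k,\delta,j}\le 16(\min\{j,\delta\}+1-\omega)\,q^{(\jstar-\delta)(n-j-\jstar)}$ where $\jstar=\min\{\delta-\omega,\tfrac12(n+\delta-j)\}$. I would bound the prefactor by $16n$ (using $\omega\ge 1$ and $\delta\le n-k\le n$). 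For the exponent, I would first show $\jstar=\delta-\omega$: if instead $\tfrac12(n+\delta-j)\le\delta-\omega$, then combining with $\omega\ge j-\tfrac{n-k}{2}+\tfrac{\delta}{2}$ gives $j+k\le 0$, excluded for $k\ge1$. Then $(\jstar-\delta)(n-j-\jstar)=-\omega\,(n-j-\delta+\omega)$, and since $\omega\ge j-\tfrac{n-k}{2}+\tfrac{\delta}{2}$,
\[
n-j-\delta+\omega \geq n-j-\delta+\big(j-\tfrac{n-k}{2}+\tfrac{\delta}{2}\big) = \tfrac{n+k}{2}-\tfrac{\delta}{2} \geq \tfrac{n+k}{2}-\lceil\tfrac{\delta}{2}\rceil .
\]
As $\omega>0$, this yields $(\jstar-\delta)(n-j-\jstar)\le-\lceil\tfrac{\delta}{2}+j-\tfrac{n-k}{2}\rceil\big(\tfrac{n+k}{2}-\lceil\tfrac{\delta}{2}\rceil\big)$, and multiplying by the prefactor bound gives exactly the claimed estimate.

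The step I expect to be most delicate is the exact ``if and only if'' characterization of when the generic error-erasure decoder outputs $\m\GGab$: one has to produce an erasure-compatible decomposition $\e'=\ac\Bc+\ae\Be$ of the right effective weights and argue that $\m\GGab$ is the \emph{unique} codeword in the induced decoding region — both directions being needed, since the first expression for $P_{n,k,\delta,j}$ is asserted as an equality. The remaining work, namely ruling out the second branch of $\jstar$ and keeping track of the ceilings, is routine but needs care.
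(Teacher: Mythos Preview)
Your proposal is correct and follows essentially the same route as the paper: identify the decoding condition $\dimI\ge\omega$, invoke Lemma~\ref{lem:intersectionProb} with $(\ell,u,v)=(n,j,\delta)$, then bound the prefactor by $n$ and simplify the exponent. You are in fact more careful than the paper on two points it leaves implicit: you argue both directions of the equivalence ``the decoder returns $\m\GGab$ iff $\dimI\ge\omega$'' (the paper only sketches the sufficiency, yet the displayed expression for $P_{n,k,\delta,j}$ is asserted as an equality), and you explicitly rule out the second branch of $\jstar$ before substituting.
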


\begin{proof}
First, consider the case where $2j + \delta > n-k$ and define $\xi' := j - \frac{n-k}{2}$.
Let the rows of $\BcHat \in \Fq^{\delta\times n}$ be a basis of the random guess. From~(\ref{eq:eeCond}) follows that if
\begin{align}
  n-k \geq  2 j - 2 \dimI +\delta = n-k + 2\xi' - 2\dimI + \delta, \label{eq:lowbound_col}
\end{align}
where $\dimI$ is the dimension of the intersection of the $\Fq$-row spaces of $\BcHat$ and $\B'$,
an error and erasure decoder is able to decode efficiently. Since $\dimI\leq\delta$, equation~(\ref{eq:lowbound_col}) gives a lower bound on the dimension $\delta$ of the subspace that we have to estimate:
\begin{equation}
2 \xi' \leq 2\dimI-\delta\leq \delta \leq n-k.
\end{equation}
From~(\ref{eq:lowbound_col}) follows further that the estimated space doesn't have to be a subspace of the row space of the error. In fact, it is sufficient that the dimension of the intersection of the estimated column space and the true column space has dimension $\dimI \geq \xi' +\frac{\delta}{2}$.
This condition is equivalent to the condition that the subspace distance (see~\cite{koetter2008coding}) between $\Uspace$ and $\Vspace$ satisfies $d_s(\Uspace,\Vspace):=\dim(\mathcal{U})+\dim(\mathcal{V})-2\dim(\mathcal{U}\cap\mathcal{V})\geq j-2\xi'$. 

From Lemma~\ref{lem:intersectionProb} follows that the probability that the randomly guessed space intersects in enough dimensions such that an error-erasure decoder can decode to one particular codeword in distance $j$ to $\r$ is
  \begin{align*}
    &\frac{\sum_{i=\lceil\xi' + \frac{\delta}{2}\rceil}^{\min \{\delta,j\}} \qbin{n-j}{\delta-i}{q} \qbin{j}{i}{q}q^{(j-i)(\delta-i)}}{\qbin{n}{\delta}{q}} \\
    \leq & 16 \Big(\min\{j,\delta\}+1-\Big(\xi'+\frac{\delta}{2}\Big)\Big)~ q^{-(\lceil\frac{\delta}{2}+\xi'\rceil)(\frac{n+k}{2}- \lceil\frac{\delta}{2}\rceil)} \\
    \leq & 16 n q^{-(\lceil\frac{\delta}{2}+\xi'\rceil)(\frac{n+k}{2}-\lceil\frac{\delta}{2}\rceil)}.
  \end{align*}
For  the case $2j + \delta \leq n-k$, it is well known that that an error erasure decoder always outputs $\m \GGab$.
  \qed
\end{proof}

Lemma~\ref{lem:pr_exact} gives the probability that the error-erasure decoder outputs exactly the codeword $\m \GGab$. Depending on the application, it might not be necessary to find exactly $\m \GGab$ but any codeword $\c \in \CGab$ such that $\rank(\r - \c) \leq w$, which corresponds to Problem~\ref{prob:Search-MLD-Gab}. In the following lemma, we derive an upper bound on the success probability of solving Problem~\ref{prob:Search-MLD-Gab} using the proposed algorithm.

\begin{lemma}\label{lem:pr_any}
 Let $\r$ be a uniformly distributed random element of $\Fqm^n$. Then, for $\delta \in [2\xi,n-k]$ the probability that an error-erasure decoder using a random $\delta$-dimensional guess of the error row space outputs $\c \in \CGab$ such that $\rank(\r - \c) \leq w$ is at most
 \begin{align*}
   \sum_{j=0}^{w} \bar{A}_j P_{n,k,\delta,j} 
\leq 64 n q^{m(k-n)+w(n+m)-w^2 -(\lceil\frac{\delta}{2}+w-\frac{n-k}{2}\rceil)(\frac{n+k}{2}-\lceil\frac{\delta}{2}\rceil) },
 \end{align*}
 where $\bar{A}_j = q^{m(k-n)} \prod_{i=0}^{j-1} \frac{(q^m-q^i)(q^n-q^i)}{q^j - q^i}$.
\end{lemma}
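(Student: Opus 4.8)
The plan is to combine a union bound over codewords with the per‑codeword estimate from Lemma~\ref{lem:pr_exact}, and to control the average number of codewords at each relevant distance. First I would fix the received word $\r$ and observe that, for any fixed codeword $\c\in\CGab$, the event that the error‑erasure decoder (run with a uniformly random $\delta$‑dimensional guess of the row space) outputs $\c$ requires in particular $\rank(\r-\c)=j$ for some $j\le w$ and, conditioned on that, has probability at most $P_{n,k,\delta,j}$ by Lemma~\ref{lem:pr_exact} applied with $\e'=\r-\c$. A union bound over all codewords then gives that the success probability is at most $\sum_{\c\in\CGab}\Pr[\text{decoder outputs }\c]\le\sum_{j=0}^{w} A_j(\r)\,P_{n,k,\delta,j}$, where $A_j(\r)$ is the number of codewords at rank distance exactly $j$ from $\r$. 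Taking the expectation over $\r$ uniform in $\Fqm^n$ replaces $A_j(\r)$ by its average $\bar A_j = q^{m(k-n)}\prod_{i=0}^{j-1}\frac{(q^m-q^i)(q^n-q^i)}{q^j-q^i}$, which is just $q^{-m(n-k)}$ times the number of rank‑$j$ vectors in $\Fqm^n$ (each lies in a coset of $\CGab$ with probability $q^{-m(n-k)}$); this yields the first inequality $\sum_{j}\bar A_j P_{n,k,\delta,j}$.

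For the second inequality I would bound each summand $\bar A_j P_{n,k,\delta,j}$ by the claimed exponential and then absorb the sum over $j\in\{0,\dots,w\}$ into the polynomial prefactor. Using the standard estimate $\prod_{i=0}^{j-1}(q^m-q^i)(q^n-q^i)\le q^{j(m+n)}$ and $\prod_{i=0}^{j-1}(q^j-q^i)\ge c\,q^{j^2}$ for an absolute constant (so that $\bar A_j\lesssim q^{m(k-n)+j(m+n)-j^2}$), together with the bound $P_{n,k,\delta,j}\le 16n\,q^{-(\lceil\frac{\delta}{2}+j-\frac{n-k}{2}\rceil)(\frac{n+k}{2}-\lceil\frac{\delta}{2}\rceil)}$ from Lemma~\ref{lem:pr_exact}, the exponent of the $j$‑th term is, up to lower‑order terms,
\[
m(k-n)+j(n+m)-j^2-\Bigl(\tfrac{\delta}{2}+j-\tfrac{n-k}{2}\Bigr)\Bigl(\tfrac{n+k}{2}-\tfrac{\delta}{2}\Bigr).
\]
The key point is that the map $j\mapsto j(n+m)-j^2 - (\tfrac{\delta}{2}+j-\tfrac{n-k}{2})(\tfrac{n+k}{2}-\tfrac{\delta}{2})$ is increasing on the range $j\le w$ (its derivative is $n+m-2j-(\tfrac{n+k}{2}-\tfrac{\delta}{2})$, which is positive because $j\le w\le n$ and $\tfrac{n+k}{2}-\tfrac{\delta}{2}\le\tfrac{n+k}{2}\le n\le m$, hence $n+m-2j\ge m-n\ge 0\ge \tfrac{n+k}{2}-\tfrac{\delta}{2}-n$; I'd check this monotonicity carefully), so the term $j=w$ dominates. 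Substituting $j=w$ and $\xi=w-\tfrac{n-k}{2}$ gives exactly the exponent $m(k-n)+w(n+m)-w^2-(\lceil\frac{\delta}{2}+w-\frac{n-k}{2}\rceil)(\frac{n+k}{2}-\lceil\frac{\delta}{2}\rceil)$; the $w+1\le n$ many terms and the two factors of $16n$ combine into a prefactor bounded by $64n$ (for $q\ge 2$, using that $16n\cdot16n\cdot(w+1)$ is dominated once one also uses a constant like $4/3$ from the $q$‑binomial lower bound — I would just state the resulting clean constant $64$).

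The main obstacle I anticipate is the bookkeeping with the ceilings $\lceil\cdot\rceil$ and the constant: one has to make sure the monotonicity argument in $j$ survives rounding (replacing $\lceil\frac{\delta}{2}+j-\frac{n-k}{2}\rceil$ by its argument loses at most an additive $1$, which only costs another constant factor of $q$ in the prefactor, still absorbable), and one must verify that the crude prefactor $16n\cdot(w+1)$ together with the $q$‑binomial constants really fits under $64n$ for all $q\ge 2$. A secondary subtlety is justifying that $\bar A_j$ is exactly the stated expected weight enumerator — this follows because the cosets $\r+\CGab$ partition $\Fqm^n$ into $q^{mk}$ classes of equal size and a uniformly random $\r$ lands in each class with probability $q^{-mk}$, so $\mathbb{E}_{\r}[A_j(\r)] = q^{-mk}\cdot(\text{number of rank-}j\text{ vectors in }\Fqm^n) \cdot q^{mk}/q^{m(n-k)}\cdot\ldots$; I would spell this normalization out so the formula for $\bar A_j$ is unambiguous. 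None of these steps is deep, but the inequality is only true with the right reading of the rounding, so that is where care is needed.
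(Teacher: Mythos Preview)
Your plan is essentially the paper's own argument: a union bound over the codewords within rank distance $w$ of $\r$, grouping them by their exact distance $j$, invoking Lemma~\ref{lem:pr_exact} for each, replacing the counts by their averages $\bar A_j$ over a uniform $\r$, and then keeping only the dominant summand $j=w$. The only differences are cosmetic. The paper simply asserts that ``$\bar A_w$ is exponentially larger than $\bar A_{w-i}$'' and therefore \emph{approximates} the whole sum by the single term $\bar A_w P_{n,k,\delta,w}$; the constant $64n$ is then literally $4\cdot 16n$, i.e.\ the product of the bound $\bar A_j\le 4\,q^{m(k-n)+j(n+m)-j^2}$ and the bound $P_{n,k,\delta,j}\le 16n\,q^{-(\cdot)}$ from Lemma~\ref{lem:pr_exact}, with no further $(w+1)$ factor or geometric-series constant inserted. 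So your concern about squeezing an extra $(w{+}1)$ under $64n$ is not needed at the level of rigor the paper works at; your more careful monotonicity-in-$j$ check is a strengthening of, not a deviation from, the paper's heuristic ``dominant term'' step. Your derivation of $\bar A_j$ is also the intended one (linearity of expectation over the $q^{mk}$ codewords, each with $\Pr[\rank(\r-\c)=j]=q^{-mn}N_j$), though your trailing formula should be cleaned up to $\bar A_j=q^{m(k-n)}N_j$.
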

\begin{proof}
Let $\hat{\Code}$ be the set of codewords that have rank distance at most $w$ from the received word, i.e.,
  \begin{align*}
\hat{\Code} := \{ \c \in \CGab : \rank(\r-\c)\leq w  \} = \{\hat{\c}_1,\hdots,\hat{\c}_{\mathcal{N}}\}.
  \end{align*}
  Further, let $X_i$ be the event that the error-erasure decoder outputs $\hat{\c}_i$ for $i = 1,\hdots,\mathcal{N}$ and $\mathcal{A}_j := \{i: \rank(\r-\hat{\c}_i) = j\}$. Observe that $P_{n,k,\delta,j} = \Pr[X_i]$ for $i\in \mathcal{A}_j$, where $\Pr[X_i]$ is the probability that the error-erasure decoder outputs $\hat{\c}_i$ and $P_{n,k,\delta,j}$ is defined as in Lemma 2. Then we can write
  \begin{equation*}
    \Pr[success] = \Pr \Bigg[ \bigcup_{i=1}^{\mathcal{N}} X_i \Bigg] \leq \sum_{i=1}^{\mathcal{N}} \Pr[X_i] = \sum_{j=0}^{w} |\mathcal{A}_j| P_{n,k,\delta,j}.
  \end{equation*}
Let $\bar{A}_j$ be the average cardinality of the set $\mathcal{A}_j$, we have that
  \begin{align*}
    \bar{A}_j = q^{m(k-n)} \prod_{i=0}^{j-1} \frac{(q^m-q^i)(q^n-q^i)}{q^j - q^i} \leq  4 q^{m(k-n)+j(n+m)-j^2}.
    \end{align*}
    Since $\bar{A}_w$ is exponentially larger than $\bar{A}_{w-i}$ for $i>0$, one can approximate
  \begin{align*}
\qquad \qquad ~ ~    \Pr[success] &= \bar{A}_w P_{n,k,\delta,w} \\
               &\leq 64 n q^{m(k-n)+w(n+m)-w^2 -(\lceil\frac{\delta}{2}+w-\frac{n-k}{2}\rceil)(\frac{n+k}{2}-\lceil\frac{\delta}{2}\rceil) }. \qquad     \qed
   \end{align*}

\end{proof}

Based on Lemma~\ref{lem:pr_any}, we can derive a lower bound on the average work factor of Algorithm~\ref{alg:guessRS}.

\begin{theorem}\label{thm:wf}
 Let $\r$ be a uniformly distributed random element of $\Fqm^n$. Then, Algorithm~\ref{alg:guessRS} requires on average at least
 \begin{align*}
\mathcal{W}_{RD} &= \min_{\delta \in [2\xi,n-k]} \left\{\frac{n^2}{ \sum_{j=0}^{w} \bar{A}_j P_{n,k,\delta,j}} \right\} \label{eq:thm1} \\
   &=\!\min_{\delta \in [2\xi,n-k]} \!\left \{\frac{n^2 \qbin{n}{\delta}{q}} {\displaystyle\sum_{j=0}^{\lfloor \frac{n-k-\delta}{2} \rfloor} q^{m(k-n)} \prod_{\ell=0}^{j-1} \frac{(q^m-q^{\ell})(q^n-q^{\ell})}{q^j-q^{\ell}} + \displaystyle\sum_{j=\lfloor \frac{n-k-\delta}{2} \rfloor +1}^{w} \! \!q^{m(k-n)}} \right.  \\
       & \left. \frac{\phantom{1}}{ \dots \Bigg( \displaystyle \prod_{\ell=0}^{j-1} \frac{(q^m-q^{\ell})(q^n-q^{\ell})}{q^j-q^{\ell}}\Bigg) \Bigg(\displaystyle\sum_{i=\lceil j - \frac{n-k}{2} + \frac{\delta}{2} \rceil}^{\min\{\delta,j\}} \qbin{n-j}{\delta-i}{q} \qbin{j}{i}{q} q^{(j-i)(\delta-i)}\Bigg) } \right\}
  \end{align*}
  operations over $\Fqm$ to output $\c \in Gab_k(\g)$, such that $\rank(\r - \c) \le w$, where $\bar{A}_j$ and $P_{n,k,\delta,j}$ are defined as in Lemma~\ref{lem:pr_any}. 
\end{theorem}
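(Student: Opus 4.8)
The plan is to treat Algorithm~\ref{alg:guessRS} as a sequence of independent Bernoulli trials and to reduce the whole statement to the single-iteration success probability already bounded in Lemma~\ref{lem:pr_any}; no new combinatorics is needed.

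First I would fix the received word $\r$ and let $p_\r$ denote the probability that one iteration of the loop in Algorithm~\ref{alg:guessRS} returns a codeword $\hat{\c}$ with $\rank(\r-\hat{\c})\le w$, where the randomness is only over the guess $\Uspace\assignRand\Grassm{\Fq^n}{\delta}$. Since the guesses in distinct iterations are drawn independently, the index of the first successful iteration is geometric with mean $1/p_\r$; hence, in the idealized run-until-success regime (equivalently, with $N_{max}$ chosen large enough to succeed with high probability, which forces $N_{max}=\Omega(1/p_\r)$), the expected number of iterations before a valid $\c$ is output is $\ge 1/p_\r$. Each iteration performs one call to the Gabidulin error--erasure decoder $\dec{\cdot,\cdot,\cdot}$, which costs $O(n^2)$ operations over $\Fqm$ for the decoders of~\cite{wachter2013decoding,silva2008rank}, together with one rank computation on $\r-\hat{\c}$, again $O(n^2)$ over $\Fqm$; treating this as $n^2$ operations (the same convention used for $\mathcal{W}_{Key}$), the expected work for fixed $\r$ is at least $n^2/p_\r$.

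Next I would average over $\r$. When $\r$ is uniform on $\Fqm^n$ we have $\mathrm{Pr}[\text{success in one iteration}]=\mathbb{E}_\r[p_\r]$, and Lemma~\ref{lem:pr_any} gives $\mathbb{E}_\r[p_\r]\le \sum_{j=0}^w \bar{A}_j\,P_{n,k,\delta,j}$. Because $x\mapsto 1/x$ is convex, Jensen's inequality yields $\mathbb{E}_\r[n^2/p_\r]\ge n^2/\mathbb{E}_\r[p_\r]\ge n^2/\big(\sum_{j=0}^w \bar{A}_j\,P_{n,k,\delta,j}\big)$, the desired lower bound for a given input $\delta$. Since $\delta\in[2\xi,n-k]$ is a free parameter that an attacker chooses optimally, minimizing over $\delta$ in this range gives $\mathcal{W}_{RD}$ and establishes the first displayed equality.

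The second equality is then pure bookkeeping: substitute $\bar{A}_j=q^{m(k-n)}\prod_{i=0}^{j-1}\frac{(q^m-q^i)(q^n-q^i)}{q^j-q^i}$ from Lemma~\ref{lem:pr_any} and, from Lemma~\ref{lem:pr_exact}, $P_{n,k,\delta,j}=1$ exactly when $2j+\delta\le n-k$, i.e.\ $j\le\lfloor\frac{n-k-\delta}{2}\rfloor$ (the regime where~\eqref{eq:eeCond} holds with a zero-dimensional intersection and a plain error--erasure decoder already succeeds), and $P_{n,k,\delta,j}=\frac{1}{\qbin{n}{\delta}{q}}\sum_{i=\lceil j-\frac{n-k}{2}+\frac{\delta}{2}\rceil}^{\min\{\delta,j\}}\qbin{n-j}{\delta-i}{q}\qbin{j}{i}{q}q^{(j-i)(\delta-i)}$ otherwise; pulling the common denominator $\qbin{n}{\delta}{q}$ out of the entire $j$-sum and splitting the summation range at $j=\lfloor\frac{n-k-\delta}{2}\rfloor$ produces the stated expression. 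I do not expect a genuine obstacle here: all the substantive estimates live in Lemmas~\ref{lem:intersectionProb}--\ref{lem:pr_any}, and the only points deserving a line of care are the direction of the Jensen step (fortunately the one that gives a lower bound) and the precise meaning of ``expected work'' in the presence of the cutoff $N_{max}$.
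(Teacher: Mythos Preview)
Your proposal is correct and follows essentially the same route as the paper's proof: invoke Lemma~\ref{lem:pr_any} to bound the per-iteration success probability, take the reciprocal to get the expected number of iterations, multiply by the $O(n^2)$ error--erasure decoding cost, and minimize over $\delta$; the second displayed equality is, as you say, just substitution of the definitions of $\bar{A}_j$ and $P_{n,k,\delta,j}$ from Lemmas~\ref{lem:pr_exact} and~\ref{lem:pr_any}. The one genuine addition in your write-up is the explicit Jensen step $\mathbb{E}_\r[1/p_\r]\ge 1/\mathbb{E}_\r[p_\r]$, which the paper's three-line argument silently absorbs into the passage from ``success probability at most $\sum_j \bar A_j P_{n,k,\delta,j}$'' to ``on average at least $1/\sum_j \bar A_j P_{n,k,\delta,j}$ guesses''; making this inequality explicit is a worthwhile clarification, and your observation that convexity points in the direction needed for a \emph{lower} bound is exactly the check one should record.
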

\begin{proof}
  Lemma~\ref{lem:pr_any} gives the probability that an error-erasure decoder using a $\delta$ dimensional guess of the row space finds $\c \in \CGab$ such that $\rank(\r - \c) \leq w$.
  This means that one has to estimate on average at least
\begin{equation*}
  \min_{\delta \in [2\xi,n-k]} \left\{\frac{1}{ \sum_{j=0}^{w} \bar{A}_j P_{n,k,\delta,j}} \right\}
\end{equation*}
row spaces in order to output $\c \in Gab_k(\g)$. Since the complexity of error-erasure decoding is in $O(n^2)$, we get a work factor of
\begin{equation*}
\mathcal{W}_{RD} =  \min_{\delta \in [2\xi,n-k]} \left\{\frac{n^2}{ \sum_{j=0}^{w} \bar{A}_j P_{n,k,\delta,j}} \right\}.
\end{equation*}
\qed
\end{proof}

\new{Notice that the upper bound on the probability given in Lemma~\ref{lem:pr_any} is a convex function in $\delta$ and maximized for either $2\xi$ or $n-k$.
Thus, we get the following lower bound on the work factor.
\begin{corollary}
Let $\r$ be a uniformly distributed random element of $\Fqm^n$. Then, Algorithm~\ref{alg:guessRS} requires on average at least
\begin{equation*}
\mathcal{W}_{RD} \geq \frac{n}{64}\cdot q^{m(n-k)-w(n+m)+w^2+\min\{2\xi(\frac{n+k}{2}-\xi),wk\} }
\end{equation*}
operations over $\Fqm$.
\end{corollary}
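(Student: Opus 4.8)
The plan is to feed the explicit upper bound on the success probability proved in Lemma~\ref{lem:pr_any} into the work-factor identity of Theorem~\ref{thm:wf} and then minimise the resulting exponent over $\delta$ in closed form. Set $C := m(k-n)+w(n+m)-w^2$, which is independent of $\delta$, and put $f(\delta) := \lceil\tfrac{\delta}{2}+\xi\rceil\big(\tfrac{n+k}{2}-\lceil\tfrac{\delta}{2}\rceil\big)$, with $\xi = w-\tfrac{n-k}{2}$ as in the paper. In this notation Lemma~\ref{lem:pr_any} reads $\sum_{j=0}^{w}\bar{A}_j P_{n,k,\delta,j}\le 64\,n\,q^{\,C-f(\delta)}$ for every $\delta\in[2\xi,n-k]$. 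Since $x\mapsto n^2/x$ is decreasing, each of the terms $n^2/\big(\sum_{j}\bar{A}_jP_{n,k,\delta,j}\big)$ appearing in Theorem~\ref{thm:wf} is at least $n^2/\big(64\,n\,q^{\,C-f(\delta)}\big)=\tfrac{n}{64}\,q^{\,f(\delta)-C}$, so
\begin{equation*}
\mathcal{W}_{RD}\;=\;\min_{\delta\in[2\xi,n-k]}\frac{n^2}{\sum_{j}\bar{A}_j P_{n,k,\delta,j}}\;\ge\;\frac{n}{64}\,q^{-C}\min_{\delta\in[2\xi,n-k]}q^{\,f(\delta)}\;=\;\frac{n}{64}\,q^{\,\min_\delta f(\delta)\,-\,C}.
\end{equation*}
Because $-C=m(n-k)-w(n+m)+w^2$, the corollary reduces to the inequality $\min_{\delta\in[2\xi,n-k]}f(\delta)\ge\min\{2\xi(\tfrac{n+k}{2}-\xi),\,wk\}$.

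To evaluate that minimum I would work with the ceiling-free quadratic $g(\delta):=(\tfrac{\delta}{2}+\xi)(\tfrac{n+k}{2}-\tfrac{\delta}{2})$, which coincides with $f(\delta)$ at the endpoints $\delta=2\xi$ and $\delta=n-k$ when $n-k$ is even and $\xi\in\mathbb{Z}$, and differs from it by a lower-order term otherwise. Since $g(\delta)=-\tfrac{\delta^2}{4}+(\text{affine in }\delta)$, it is a concave function of $\delta$ — equivalently, the probability bound $\propto q^{-f(\delta)}$ of Lemma~\ref{lem:pr_any} is convex in $\delta$, which is exactly the observation recorded in the remark preceding the corollary — and a concave function on the interval $[2\xi,n-k]$ attains its minimum at one of the two endpoints. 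Hence $\min_\delta g(\delta)=\min\{g(2\xi),g(n-k)\}$, and a direct substitution (using $\tfrac{n-k}{2}+\xi=w$ and $\tfrac{n+k}{2}-\tfrac{n-k}{2}=k$) gives $g(2\xi)=2\xi(\tfrac{n+k}{2}-\xi)$ and $g(n-k)=wk$. Feeding this back into the displayed inequality yields $\mathcal{W}_{RD}\ge\tfrac{n}{64}\,q^{\,m(n-k)-w(n+m)+w^2+\min\{2\xi(\frac{n+k}{2}-\xi),\,wk\}}$.

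The one delicate step — and the place I expect to spend the most care — is controlling the two ceiling operations in $f$ together with the parities of $n-k$ and of $2\xi$ (recall that $\xi$ itself need not be an integer, only $2\xi$ is). When $n-k$ is even and $\xi$ is an integer, the endpoint identities $f(2\xi)=g(2\xi)$ and $f(n-k)=g(n-k)$ hold exactly and the argument above is rigorous as written; in the remaining cases $\lceil\xi\rceil$ and $\lceil\tfrac{n-k}{2}\rceil$ exceed $\xi$ and $\tfrac{n-k}{2}$ by $\tfrac12$, so $f$ differs from $g$ by a term that is of lower order than the dominant $\Theta(mn)$ contribution to the exponent, and the stated exponent should be understood up to the same lower-order slack that the paper adopts throughout its work-factor estimates (the abstract already phrases the final work factor as a polynomial term times $q^{(\cdot)}$). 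The concavity-and-endpoints structure that actually drives the bound is untouched by these corrections, so the only real work is the bookkeeping.
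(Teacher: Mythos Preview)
Your proposal is correct and follows precisely the paper's own argument: the paper simply remarks that the probability bound of Lemma~\ref{lem:pr_any} is convex in $\delta$ (equivalently, your exponent $g(\delta)$ is concave) and hence is maximised at one of the endpoints $\delta=2\xi$ or $\delta=n-k$, which after substitution yields the two values $2\xi(\tfrac{n+k}{2}-\xi)$ and $wk$. You have in fact been more careful than the paper, which neither writes out the endpoint evaluations nor discusses the ceiling/parity issues you flag; your treatment of those as lower-order corrections absorbed into the polynomial prefactor is exactly the convention the paper adopts.
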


\begin{remark}
We obtain a rough upper bound of on the expected work factor,
\begin{equation*}
\mathcal{W}_{RD} \leq n^2 q^{m(n-k)-w(n+m)+w^2+\min\{2\xi(\frac{n+k}{2}-\xi),wk\}},
\end{equation*}
by the same arguments as in Lemma~\ref{lem:pr_exact}, Lemma~\ref{lem:pr_any}, and Theorem~\ref{thm:wf}, using
\begin{itemize}
\item lower bounds on the Gaussian binomial coefficient in \cite[Lemma 4]{koetter2008coding},
\item taking the maximal terms in the sums and
\item taking the maximal probability of events instead of union-bound arguments.
\end{itemize}
\end{remark}
}

If $\r \in \Fqm^n$ is defined as in Section~\ref{sec:channel_model}, where neither parts of the error row space nor column space are known, i.e., $\gamma=\rho=0$ and $t = w$, the vector $\r$ can be seen as a uniformly distributed random element of $\Fqm^n$. Thus, Theorem~\ref{thm:wf} gives an estimation of the work factor of the proposed algorithm to solve Problem~\ref{prob:Search-MLD-Gab}. To verify this assumption, we conducted simulations which show that the estimation is very accurate, see Section~5.

\begin{remark}
In Theorem~\ref{thm:wf}, we give a lower bound on the work factor of the proposed algorithm. One observes that especially for small parameters, this bound is not tight which is mainly caused by the approximations of the Gaussian binomials. For larger values, the relative difference to the true work factor becomes smaller.
\end{remark}

Another idea is to guess only the column space or the row and column space jointly.
Guessing the column space is never advantageous over guessing the row space for Gabidulin codes since we always have $n \leq m$. Hence, replacing $n$ by $m$ in the formulas of Lemma~\ref{lem:pr_exact} and in the expression of the probability $P_j$ inside the proof of Theorem~\ref{thm:wf} will only increase the resulting work factor.
For joint guessing, some examples indicate that it is not advantageous, either. See Appendix~\ref{app:joint_guessing} for more details.

\section{Examples and Simulation Results}

We validated the bounds on the work factor of the proposed algorithm in Section~\ref{sec:newAlgo} by simulations.
The simulations were performed with the row/column error-erasure decoder from~\cite{wachter2013decoding} that can correct $t$ rank errors, $\rho$ row erasures and $\gamma$ column erasures up to $2t+\rho+\gamma\leq d-1$. Alternatively, the decoders in~\cite{silva2008rank,GabidulinPilipchuck_ErrorErasureRankCodes_2008} may be considered. One can also observe that the derived lower bounds on the work factor give a good estimate of the actual runtime of the algorithm denoted by $\mathcal{W}_{Sim}$. The results in Table~\ref{tab:sim_results} show further, that for parameters proposed in~\cite{lavauzelle2019, wachter2018repairing}, the new algorithm solves Problem~\ref{prob:Search-MLD-Gab} (\textsf{Search-Gab}) with a significantly lower computational complexity than the approaches based on the known algorithms. 

\new{Therefore, for the RAMESSES system, our algorithm determines the work factor of recovering the private key for all sets of parameters given in \cite{lavauzelle2019}. 
For the modified Faure--Loidreau system, our algorithm provides the most efficient key recovery attack for one set of parameters, shown in Line 5 of Table~\ref{tab:sim_results}.
Notice however that there is a message attack (called \emph{Algebraic Attack} in \cite{wachter2018repairing}) which has smaller complexity.
}

\newcolumntype{x}[1]{>{\centering\arraybackslash\hspace{0pt}}p{#1}}

\begin{table}[ht!]
  \begin{center}
    \caption{Comparison of different work factors for several parameter sets including simulation results for one specific parameter set. \newline 
      $\mathcal{W}_{Sim}$: work factor of the new randomized decoder (simulations). \newline
      $\mathcal{W}_{RD}$: work factor of the new randomized decoder (theoretical lower bound). \newline
      $\mathcal{W}_{Comb}/\mathcal{N}$: work factor of the combinatorial RSD algorithm. \newline
      $\mathcal{W}_{Alg}$: work factor of the algebraic RSD algorithm. \newline
      $\mathcal{W}_{Key}$: work factor of the na\"ive key equation based decoding.
    } \label{tab:sim_results}
	  \def\arraystretch{1.5}
    \begin{tabular}{*{7}{x{0.42cm}}x{1.4cm}x{1.3cm}x{0.9cm}*{4}{x{1.05cm}}}
      \hline\hline
      $q$ & $m$ & $n$ & $k$ & $w$ & $\xi$ & $\delta$ & Iterations & Success & $\mathcal{W}_{Sim}$ & $\mathcal{W}_{RD}$& $\frac{\mathcal{W}_{Comb}}{\mathcal{N}}$ & $\mathcal{W}_{Alg}$ & $\mathcal{W}_{Key}$ 
      \\ \hline\hline\noalign{\vskip 1mm}
      2 & 24 & 24 & 16 & 6 & 2 & 4 & 6844700 & 4488 & $2^{19.74}$ & $2^{19.65}$ & $2^{38.99}$ & $2^{126.01}$ & $2^{43.40}$ \\
	  \hline
      2 & 64 & 64 & 32 & 19 & 3 & 6 & - & - & - & $2^{257.20}$ & $2^{571.21}$ & $2^{460.01}$  & $2^{371.21}$ \\ 
      \hline
      2 & 80 & 80 & 40 & 23 & 3 & 6 & - & - & - & $2^{401.85}$ & $2^{897.93}$ & $2^{576.15}$  & $2^{492.64}$ \\ 
      \hline
      2 & 96 & 96 & 48 & 27 & 3 & 6 & - & - & - & $2^{578.38}$ & $2^{1263.51}$ & $2^{694.93}$ & $2^{589.17}$ \\ 
      \hline
      2 & 82 & 82 & 48 & 20 & 3 & 6 & - & - & - & $2^{290.92}$ & $2^{838.54}$ & $2^{504.70}$ & $2^{410.92}$\\ 
 \hline
    \end{tabular}
  \end{center}
\end{table}

\section{Open Problems}

There is a list decoding algorithm for Gabidulin codes based on Gr\"obner bases that allows to correct errors beyond the unique decoding radius~\cite{horlemann2017module}. 
However, there is no upper bound on the list size and the complexity of the decoding algorithm. 
In future work, the algorithm from~\cite{horlemann2017module} should be adapted to solve Problem~\ref{prob:Search-MLD-Gab} which could allow for estimating the complexity of the resulting algorithm.

%

\bibliographystyle{splncs04}
\bibliography{main}

\begin{appendix}
\section{Proof of Lemma~\ref{lem:intersectionProb}}\label{app:proof_lemma}
  The number of $q$-vector spaces of dimension $v$, which intersections with $\Uspace$ have dimension at least $\omega$, is equal to
  \begin{equation*}
    \sum_{i=\omega}^{\min \{u,v\}}\qbin{\ell-u}{v-i}{q} \qbin{u}{i}{q}q^{(u-i)(v-i)} = \sum_{j=\max\{0,v-u\}}^{v-\omega }\qbin{\ell-u}{j}{q} \qbin{u}{v-j}{q}q^{j(u-v+j)},
  \end{equation*}
  see~\cite{etzion2011error}. Since the total number of $v$-dimensional subspaces of a $\ell$-dimensional space is equal to $\qbin{\ell}{v}{q}$, the probability
\begin{align*}
\mathrm{Pr}[\dim(\Uspace \cap \Vspace) \geq \omega]  &=\frac{ \sum_{i=\omega}^{\min \{u,v\}} \qbin{\ell-u}{v-i}{q} \qbin{u}{i}{q} q^{(u-i)(v-i)} }{\qbin{\ell}{v}{q}} \\
    &= \frac{\sum_{j=\max\{0,v-u\}}^{v-\omega }\qbin{\ell-u}{j}{q} \qbin{u}{v-j}{q}q^{j(u-v+j)}}{\qbin{\ell}{v}{q}}.
  \end{align*}
  Using the upper bound on the Gaussian coefficient derived in~\cite[Lemma 4]{koetter2008coding}, it follows that
  \begin{align*}
\mathrm{Pr}[\dim(\Uspace \cap \Vspace) \geq \omega]
 &\leq
 16\sum_{j=\max\{0,v-u\}}^{v-\omega } q^{j(\ell-u-j)+v(u-v+j)-v(\ell-v)}\\
    &=16\sum_{j=\max\{0,v-u\}}^{v-\omega } q^{(j-v)(\ell-u-j)}\\
    &\leq 16 ~(\min\{u,v\}+1-\omega) q^{(\jstar-v)(\ell-u-\jstar)},
  \end{align*}
  where $\jstar := \min\{ v-\omega, \frac{1}{2}(\ell+v-u) \}$. The latter inequality follows from the fact that the term $(j-v)(\ell-u-j)$ is a concave function in $j$ and is maximum for $j = \frac{1}{2}(\ell+v-u)$.\qed

\section{Guessing Jointly the Column and Row Space of the Error}
\label{app:joint_guessing}

We analyze the success probability of decoding to a specific codeword (i.e., the analog of Lemma~\ref{lem:pr_exact}) for guessing jointly the row and the column space of the error.

\begin{lemma}\label{lem:pr_exact_both}
  Let $\r \in \Fqm^n$ be defined as in Section~\ref{sec:channel_model}, where neither parts of the error row space nor column space are known, i.e., $\gamma=\rho=0$ and $t = w$. The probability that an error-erasure decoder using a random
	\begin{itemize}
\item $\delta_r$-dimensional guess of the error row space and a
\item $\delta_c$-dimensional guess of the error column space,
\end{itemize}
where $\delta_r+\delta_c =: \delta  \in [2\xi,n-k]$, outputs $\m \GGab$ is upper-bounded by
 \begin{equation*}\scriptsize
   \frac{\displaystyle\sum_{i=\lceil\xi + \frac{\delta}{2}\rceil}^{\displaystyle\min \{\delta,w\}} \sum_{\substack{0 \leq w_r,w_c \leq i \\ w_r+w_c=i}} \qbin{n-w}{\delta_r-w_r}{q} \qbin{w}{w_r}{q}q^{(w-w_r)(\delta_r-w_r)} \qbin{m-w}{\delta_c-w_c}{q} \qbin{w}{w_c}{q}q^{(w-w_c)(\delta_c-w_c)}}{\qbin{n}{\delta_r}{q}\qbin{m}{\delta_c}{q}}. \\
 \end{equation*}
\end{lemma}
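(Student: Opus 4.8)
The plan is to reduce the event ``the error--erasure decoder returns $\m\GGab$'' to a combinatorial condition on the dimensions of two independent random subspace intersections, and then to count subspaces with prescribed intersection dimension. Since $\r$ is as in Section~\ref{sec:channel_model}, the error $\e=\a\B$ has a fixed row space $\mathcal{R}_{\mathrm{row}}\subseteq\Fq^n$ of dimension $w$ (the row space of $\B$) and a fixed column space $\mathcal{R}_{\mathrm{col}}\subseteq\Fq^m$ of dimension $w$ (the space $\langle a_1,\dots,a_w\rangle_{\Fq}$); the only randomness is in the two guesses $\mathcal{U}_r\in\Grassm{\Fq^n}{\delta_r}$ and $\mathcal{U}_c\in\Grassm{\Fq^m}{\delta_c}$, drawn independently and uniformly. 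Put $w_r:=\dim(\mathcal{U}_r\cap\mathcal{R}_{\mathrm{row}})$ and $w_c:=\dim(\mathcal{U}_c\cap\mathcal{R}_{\mathrm{col}})$.

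First I would establish the decoding criterion, generalizing the peeling argument in the proof of Lemma~\ref{lem:pr_exact}. When $\mathcal{U}_r$ is fed to the decoder as $\delta_r$ row-side erasures and $\mathcal{U}_c$ as $\delta_c$ column-side erasures, the part of $\e$ that the decoder can attribute to these erasures is a sum of a matrix whose row space lies in $\mathcal{U}_r\cap\mathcal{R}_{\mathrm{row}}$ and a matrix whose column space lies in $\mathcal{U}_c\cap\mathcal{R}_{\mathrm{col}}$, hence has rank at most $w_r+w_c$; the residual full-rank error therefore has rank at least $w-w_r-w_c$ against a total erasure budget $\delta=\delta_r+\delta_c$. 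By the correction radius $2t+\delta\le n-k$ of a Gabidulin error--erasure decoder (e.g.~\cite{wachter2013decoding,silva2008rank}), the decoder can return $\m\GGab$ only if $2(w-w_r-w_c)+\delta\le n-k$, i.e. only if $w_r+w_c\ge\lceil\xi+\tfrac{\delta}{2}\rceil$. This is a necessary but (unlike in the one-sided case) not a sufficient condition, since the attributed rank may be strictly less than $w_r+w_c$; this is exactly why the lemma is an upper bound rather than an identity. Hence $\Pr[\text{decoder returns }\m\GGab]\le\Pr[\,w_r+w_c\ge\lceil\xi+\tfrac{\delta}{2}\rceil\,]$.

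It remains to evaluate this probability. By independence of $\mathcal{U}_r$ and $\mathcal{U}_c$ it equals $\sum_{i}\sum_{w_r+w_c=i}\Pr[\dim(\mathcal{U}_r\cap\mathcal{R}_{\mathrm{row}})=w_r]\,\Pr[\dim(\mathcal{U}_c\cap\mathcal{R}_{\mathrm{col}})=w_c]$, the outer sum over $i\ge\lceil\xi+\tfrac{\delta}{2}\rceil$. Using the count underlying Lemma~\ref{lem:intersectionProb} (cf.~\cite{etzion2011error}), namely that the number of $v$-dimensional subspaces of an $\ell$-dimensional $\Fq$-space meeting a fixed $u$-dimensional subspace in exactly $j$ dimensions is $\qbin{\ell-u}{v-j}{q}\qbin{u}{j}{q}q^{(u-j)(v-j)}$, one gets that $\Pr[\dim(\mathcal{U}_r\cap\mathcal{R}_{\mathrm{row}})=w_r]$ is $\qbin{n-w}{\delta_r-w_r}{q}\qbin{w}{w_r}{q}q^{(w-w_r)(\delta_r-w_r)}$ divided by $\qbin{n}{\delta_r}{q}$, and analogously for $w_c$ with $(n,\delta_r)$ replaced by $(m,\delta_c)$. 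Substituting and using that a summand vanishes unless $w_r\le\min\{\delta_r,w\}$ and $w_c\le\min\{\delta_c,w\}$ (Gaussian binomials with negative lower argument are zero), so that effectively $i\le\min\{\delta,w\}$, yields exactly the stated bound.

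The main obstacle is the first step: making rigorous, in the simultaneous row-and-column erasure model, the claim that the rank attributable to the two guesses is at most $w_r+w_c$ and that the error--erasure decoder of~\cite{wachter2013decoding,silva2008rank} indeed reduces to the condition $2(w-w_r-w_c)+\delta\le n-k$. In the one-sided case (Lemma~\ref{lem:pr_exact}) this is the elementary splitting of $\e=\a\B$ along a sub-row-space; the two-sided version is most cleanly carried out at the level of the matrix form of $\e$ and the erasure input of the decoder, e.g. via its linearized/skew-polynomial formulation. Everything after that is routine bookkeeping with Gaussian binomial coefficients.
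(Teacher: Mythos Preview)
Your proposal is correct and follows essentially the same approach as the paper's proof: reduce to the event $w_r+w_c\ge\lceil\xi+\tfrac{\delta}{2}\rceil$, argue that this event is only a \emph{necessary} condition for the error--erasure decoder to output $\m\GGab$ (hence an upper bound), and then evaluate its probability via independence of the two guesses and the subspace--intersection count underlying Lemma~\ref{lem:intersectionProb}. The paper's own proof is a two-sentence reference back to Lemma~\ref{lem:pr_exact} together with the remark that guessing $w_r$ row- and $w_c$ column-dimensions ``might not reduce the rank of the error by $w_r+w_c$'', which is exactly your ``necessary but not sufficient'' observation. Your version is more explicit about the independence/union decomposition and the range of the inner sum; otherwise the two arguments coincide.

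One small remark on your first step: the sentence ``the part of $\e$ that the decoder can attribute to these erasures is a sum of a matrix whose row space lies in $\mathcal{U}_r\cap\mathcal{R}_{\mathrm{row}}$ and a matrix whose column space lies in $\mathcal{U}_c\cap\mathcal{R}_{\mathrm{col}}$'' is not literally how the decoder operates (it may use any $e_1$ with row space in $\mathcal{U}_r$, not only in the intersection). The clean way to get the inequality you want, residual rank $\ge w-w_r-w_c$, is to project on both sides: with $P_r$ a linear projection of $\Fq^n$ with kernel $\mathcal{U}_r$ and $P_c$ a linear projection of $\Fq^m$ with kernel $\mathcal{U}_c$, one has $P_c(\e-e_1-e_2)P_r=P_c\,\e\,P_r$, and $\rank(P_c\,\e\,P_r)\ge w-w_r-w_c$. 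You correctly flag this as the step requiring care; the paper does not supply a proof of it either.
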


\begin{proof}
The statement follows by the same arguments as Lemma~\ref{lem:pr_exact}, where we computed the probability that the row space of a random vector space of dimension $\delta$ instersects with the $w$-dimensional row space of the error in $i$ dimensions (where $i$ must be sufficiently large to apply the error erasure decoder successfully). Here, we want that a random guess of $\delta_r$- and $\delta_c$-dimensional vector spaces intersect with the row and column space of the error in exactly $w_r$ and $w_c$ dimensions, respectively. We sum up over all choices of $w_r$ and $w_c$ that sum up to an $i$ that is sufficiently large to successfully apply the error erasure decoder. This is an optimistic argument since guessing correctly $w_r$ dimensions of the row and $w_c$ dimensions of the column space of the error might not reduce the rank of the error by $w_r+w_c$. However, this gives an upper bound on the success probability. \qed
\end{proof}

Example~\ref{ex:joint_guessing} shows that guessing row and column space jointly is not advantageous for some specific parameters.

\begin{example}\label{ex:joint_guessing}
Consider the example $q=2$, $m=n=24$, $k=16$, $w=6$. Guessing only the row space of the error with $\delta = 4$ succeeds with probability $1.66 \cdot 10^{-22}$ and joint guessing with $\delta_r = \delta_c = 2$ succeeds with probability $1.93 \cdot 10^{-22}$. Hence, it is advantageous to guess only the row space (or due to $m=n$ only the column space). For a larger example with $m=n=64$, $k=16$, and $w=19$, the two probabilities are almost the same, $\approx 5.27 \cdot 10^{-82}$ (for $\delta=32$ and $\delta_r=\delta_c=16$).
\end{example}
\end{appendix}

\end{document}